\theoremstyle{plain}
\newtheorem{proposition}{Proposition}
\theoremstyle{definition}
\newtheorem{definition}{Definition}
\newtheorem{remark}{Remark}
\theoremstyle{remark}
\newcommand*\bigcdot{\mathpalette\bigcdot@{.5}}
\newcommand{\beqn}{\begin{eqnarray*}} 
	\newcommand{\eeqn}{\end{eqnarray*}}
\newcommand{\beq}{\begin{eqnarray}} 
\newcommand{\eeq}{\end{eqnarray}}
\def\deff{\stackrel{\triangle}{=}}
\title{\LARGE \bf
On the primal-dual dynamics of Support Vector Machines
}
\author{Krishna Chaitanya Kosaraju, Shravan Mohan and Ramkrishna Pasumarthy\\
\thanks{Krishna Chaitanya and Shravan Mohan are   graduate students in the Department of Electrical  Engineering,
       Indian Institute of Technology Madras, Chennai-600036,
      India. {\tt\small  kkrishnachaitanya89@gmail.com}, {\tt\small  shravan8587@gmail.com }}
\thanks{Ramkrishna is with the Department of Electrical Engineering,  Indian Institute of Technology Madras, Chennai-600036,
India {\tt\small ramkrishna@iitm.ac.in}}}
\begin{document}

\maketitle
\thispagestyle{empty}
\pagestyle{empty}

\begin{abstract}
The aim of this paper is to study the convergence of the
primal-dual dynamics pertaining to Support Vector Machines
(SVM). The optimization routine, used for determining an SVM for classification, is first formulated as a dynamical
system. The dynamical system is constructed such that its
equilibrium point is the solution to the SVM optimization
problem. It is then shown, using passivity theory, that the
dynamical system is global asymptotically stable. In other
words, the dynamical system converges onto the optimal
solution asymptotically, irrespective of the initial
condition. Simulations and computations are provided for
corroboration.
\end{abstract}
\section{INTRODUCTION}
The field of Machine Learning has gained tremendous traction over the past decade with the advent of data compilation from various sectors of the industrial world \cite{michalski2013machine}. The techniques therein have helped the industry gain crucial insights into their processes and make judicious decisions for the future. A ubiquitous component of most Machine Learning algorithms is optimization, where in a suitably chosen cost function is maximized (or minimized) under constraints. In many applications, the cost function and the constraints arise from practical considerations. As far as the optimization routines are concerned, the most well understood class of optimization problems happens to be that of convex optimization  \cite{boyd2004convex}. Convex optimization problems happen to be quite useful and have also percolated into many different application areas. A particularly interesting application is that of classification problems using Support Vector Machines. Support vector machines form a tool set for linear as well as non-linear classification \cite{scholkopf2001learning}. They can also be used effectively for non-linear regression using different kernels. As such, classification itself turns out to be quite useful in the industry; applications range from predicting defaulters in finance sector, predicting claims in the insurance sector and detecting defects in retinopathy \cite{kumar2010data,nguyen1996classification}. 

Gradient-based methods form a fundamental basis of all algorithms for solving convex optimization problems. These gradient algorithms have much to gain from a control and dynamical systems perspective; especially for a better understanding of the underlying system theoretic properties  such as stability, convergence rates, and robustness \cite{khalil1996noninear}. The convergence of gradient-based methods and Lyapunov stability, relate the solution of the optimization problem to the equilibrium point of a dynamical system \cite{Brokett,stegink2017unifying,feijer2010stability,CSS_letters}. In this context, the focus of this paper is continuous time  primal-dual gradient descent method. The formulation has its roots from \cite{kose1956solutions}, where the author constructs a dynamical system whose trajectories converges asymptotically to the solution of a min-max problem (saddle point problem). This framework has two very important characteristics. The first being the equilibrium of the dynamical system is not explicitly known but it is implicitly characterized by the Karush-Kuhn-Tucker (KKT) conditions of an optimization problem (or the optimization problem itself). Secondly, the fact that one can show stability using Lyapunov analysis without the knowledge of equilibrium set or a point. In the literature, such systems are usually called \textit{contracting systems}, a term coined in the seminal paper \cite{lohmiller1998contraction}, where the authors show that the distance between the trajectories contracts exponentially.

\noindent {\em Motivation and contribution:} 
In this paper, we consider a convex optimization formulation of a linear support vector machine problem (usually noted as primal formulation). We next propose the Lagrangian of the constrained optimization problem using which we present its dual formulation. The primal together with its dual forms gives rise to a saddle-point problem. We present the continuous time gradient descent equation for the saddle-point problem, which essentially captures two properties; minimization  of the Lagrangian with respect to the primal variables and maximization of Lagrangian with respective to the dual variables \cite{feijer2010stability,CSS_letters}. Hence these dynamics are usually noted as primal-dual dynamics. We finally present the convergence analysis of these dynamics using tools from passivity \cite{l2gain} and hybrid systems theory \cite{lygeros2003dynamical}. Note that, rewriting the algorithm as dynamical system that converges to the solution of an optimization problem has enabled the use of such systems theory tools for convergence analysis. The main objective of this note is to motivate the dynamical system formulation which will acts as a fundamental entity for future research. Simulation studies are provided to understand the behavior of Lyapunov function and visualizing the results.

The paper is organized as follows. Section II presents a brief overview of convex optimization. Section III presents results on the stability of the continuous time primal-dual dynamics used to solve convex optimization problems. Finally, in Section IV, the ideas are applied to the case of the SVM and simulations are provided for corroboration.
\begin{figure}[t]
	\centering
	\includegraphics[width=1\linewidth]{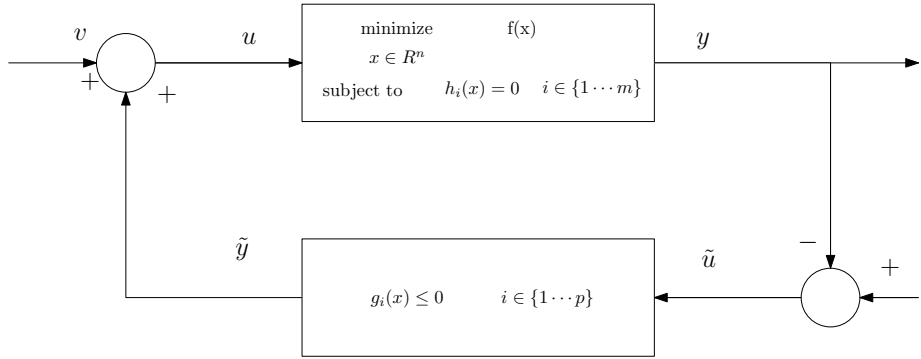}
	\caption{Interconnected optimization}
	\label{fig:my_label}
\end{figure}

\section{Convex optimization}
In this section, we present a brief overview of mathematical tools in convex optimization, that will be useful in the subsequent sections. The standard form of a convex optimization problem contains three parts:
\begin{itemize}
	\item [(i)]	A continuously differentiable convex function $f(x):\mathbb{R}^n\rightarrow \mathbb{R}$ to be minimized over $x$,
	\item[(ii)] 	affine equality contraints $h_{i}(x)=0, \hspace{0.4cm} i = 1,\hdots , m$,
	\item[(iii)] 	continuously differentiable convex inequality constraints of the form $g_{i}(x)\leq 0$, \hspace{0.4cm} $i = 1,\hdots , p$.
\end{itemize}
\noindent This can be written in the following form, commonly known as the primal formulation:
\begin{equation}\label{standard_SOP}
\begin{aligned}
& \underset{x \in \mathbb{R}^{n}}{\text{minimize}}
& & f(x)\\
& \text{subject to}
& & h_{i}(x)=0 \hspace{0.4cm} i = 1,\hdots , m\\
&
& & g_{i}(x)\leq 0 \hspace{0.4cm} i = 1,\hdots , p\\
\end{aligned}
\end{equation}
\begin{figure}[t]
	\centering
 	\includegraphics[width=1\linewidth]{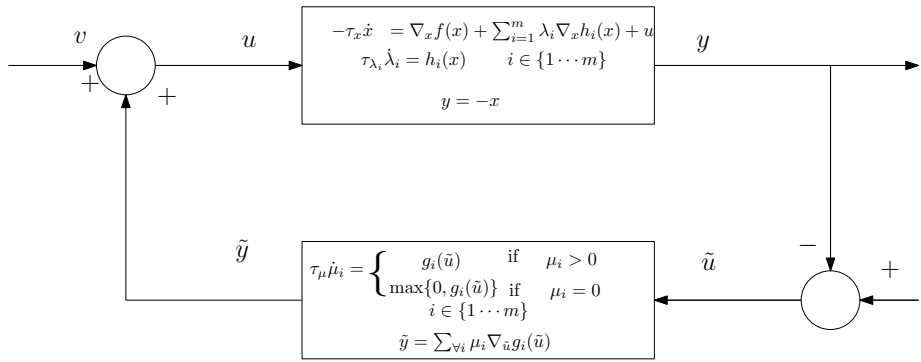}
	\caption{Interconnected primal dual dynamics}
	\label{fig:my_labela}
\end{figure}
\textbf{Karush-Kuhn-Tucker (KKT) conditions}: If the solution $x^\ast$ is optimal to the convex optimization problem \eqref{standard_SOP} then there exist $\lambda_i\in \mathbb{R}$, $i = 1,\hdots , m$ and $\mu_{i}\geq 0$, $i = 1,\hdots , p$ satisfying the following KKT conditions
\begin{eqnarray}\label{standard_KKT}
\nabla_{x} f(x^{*})+\sum_{i=1}^{m}\lambda_{i}\nabla_{x} h_{i}(x^{*})+\sum_{i=1}^{m}\mu_{i}\nabla_{x} g_{i}(x^{*})=0,\nonumber\\
h_{i}(x^{*})=0 \;\;\;\forall i \in \{1,\ldots, m\},\\
g_j(x^\ast)\leq 0,\;\;\mu_j\geq 0,\;\; \mu_jg_j(x^\ast)=0\;\;\;\forall j \in \{1,\ldots, p\}.\nonumber
\end{eqnarray}
\begin{remark}
	Note that the KKT conditions presented above in equation \eqref{standard_KKT} are only necessary conditions. We next present the requirements under which KKT conditions become sufficient.
\end{remark}
\noindent We now define the Lagrangian of the convex optimization \eqref{standard_SOP} as
\begin{eqnarray}\label{convex_Lagrangian}
\mathcal{L}(x,\lambda,\mu)=f(x)+\sum_{i=1}^{m}\lambda_{i} h_{i}(x)+\sum_{i=1}^{m}\mu_{i}g_{i}(x)
\end{eqnarray}
and the Lagrange dual function as
\begin{eqnarray}\label{convex_Lagrange_dual}
L_d(\lambda,\mu)=   
\begin{aligned}
& \underset{x \in \mathbb{R}^{n}}{\text{minimize}}
& & L(x,\lambda,\mu) 
\end{aligned}
\end{eqnarray}
\noindent giving us the following dual problem (corresponding to the primal problem \eqref{standard_SOP})
\begin{equation}\label{standard_SOP_dual}
\begin{aligned}
& \underset{\lambda \in \mathbb{R}^{m},\;\mu\in \mathbb{R}^{p}}{\text{maximize}}
& &  L_d(\lambda,\mu)\\
& \text{subject to}
& & \mu_i\geq 0 \hspace{0.4cm} i = 1,\hdots , p.\\
\end{aligned}
\end{equation}
\begin{remark}
	Dual problem  is always convex, because $L_d$ is always a concave function even when the primal \eqref{standard_SOP} is not convex. If $f^\ast$ and $L_d^\ast$ denotes the optimal values of primal and dual problems respectively, then $L_d^\ast \leq f^\ast$. Therefore dual formulations are used to find the best lower bound of the optimization problem \cite{boyd2004convex,ben2001lectures}. Further, the negative number $L_d^\ast -f^\ast$ denotes the {\em duality gap}. In the case of zero duality gap, we say that the problem \eqref{standard_SOP} {\em satisfies strong duality}.
\end{remark}
\begin{definition}\textbf{Slater's conditions}. The convex optimization problem \eqref{standard_SOP} is said to satisfy Slater's conditions if there exists an $x$ such that 
	$h_{i}(x)=0 \hspace{0.4cm} i = 1,\hdots , m$ and $ g_{i}(x)< 0 \hspace{0.4cm} i = 1,\hdots , p$. This implies that inequality constraints are strictly feasible.
\end{definition}
\begin{remark} If a convex optimization problem \eqref{standard_SOP} satisfies  Slater's conditions then the optimal values of primal and dual problems are equal, that is, \eqref{standard_SOP} satisfies strong duality. Further, in this case the KKT conditions becomes necessary and sufficient.
\end{remark}

\begin{figure}[t]
	\centering
	\includegraphics[width=0.8\linewidth]{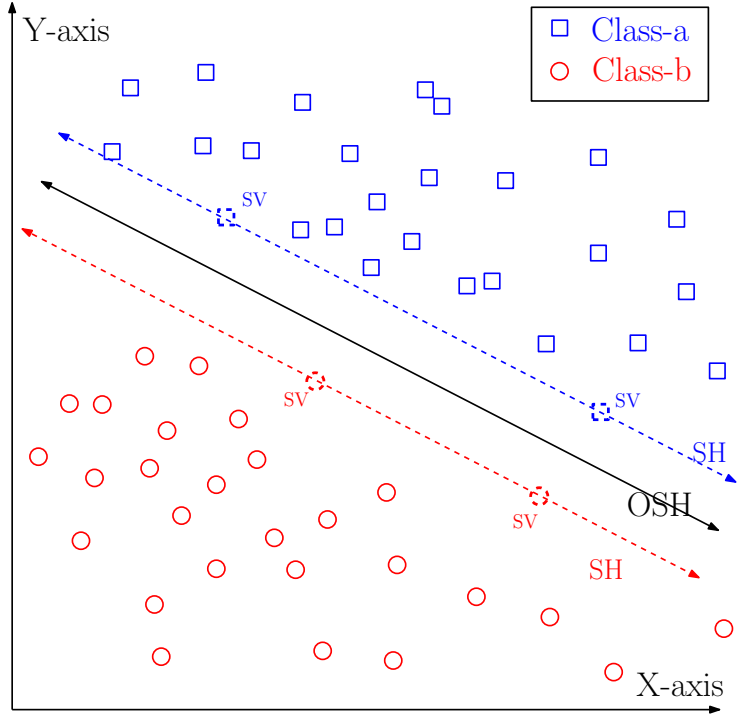}
	\caption{Description of a linear support vector machine.}
	\label{fig::svm_ex}
\end{figure}
\section{ Stability of primal-dual dynamics}
In this section, we present the continuous time primal-dual equations of a convex optimization problem. In \cite{CSS_letters}, we have shown that these dynamics can be described as a feed-back interconnection of two passive dynamical systems. The first being the primal-dual dynamics of an equality constrained optimization problem and the second corresponds to the hybrid dynamics representing the inequality constraint (see Figures \ref{fig:my_label}, \ref{fig:my_labela}). We now briefly revisit these results.

\begin{figure}[t]
	\centering
	\includegraphics[width=0.8\linewidth]{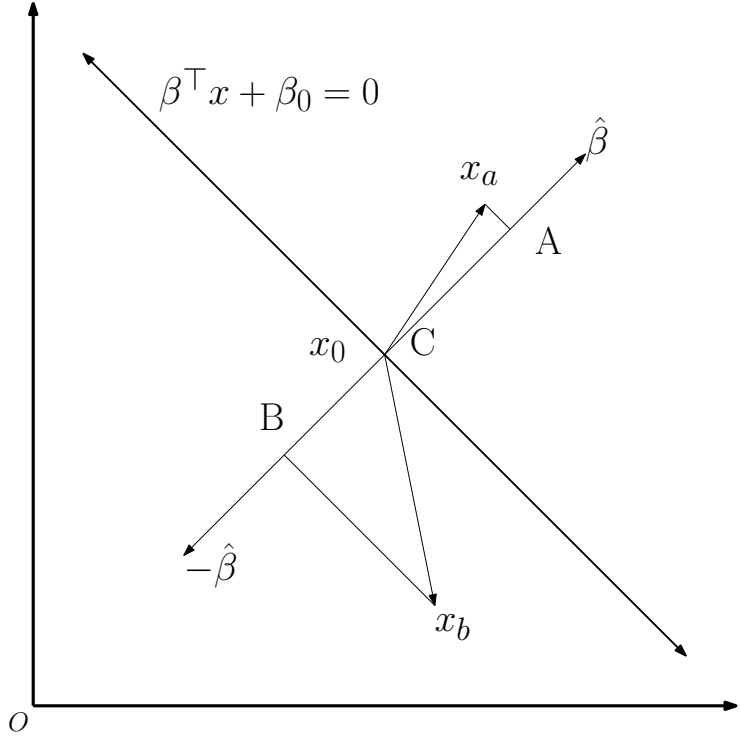}
	\caption{Mathematical formulation of a linear support vector machine, $x_a\in X_a$ (class-$a$) and $x_b\in X_b$ (class-$b$).}
	\label{fig:svm_derivation}
\end{figure}

Assume that Slater's condition holds. Since strong duality holds for \eqref{standard_SOP}, $(x^{*},\lambda^{*},\mu^{*})$ satisfying the KKT conditions \eqref{standard_KKT} is a saddle point of the Lagrangian $\mathcal{L}$. This implies, $x^{*}$ is an optimal solution to primal problem \eqref{standard_SOP} and $\left(\lambda^{*},  \mu^{*}\right)$ is optimal solution to its dual problem \eqref{standard_SOP_dual}, that is
\begin{equation}
(x^\ast,\lambda^\ast,\mu^\ast)=arg\max_{\lambda,\mu}\left(arg\min_{x}\mathcal{L}(x,\lambda,\mu)\right).
\end{equation}
This gives us the following saddle-point dynamics.
\begin{eqnarray}\label{primal-dual-dyn}
-\tau_{x}\dot{x}&=&\left(\nabla_{x} f(x)+\sum_{i=1}^{m}\lambda_{i}\nabla_{x} h_{i}(x)+\sum_{i=1}^{p}\mu_{i}\nabla_{x} g_{i}(x)\right)\nonumber\\
\tau_{\lambda}\dot{\lambda}&=& h(x)\nonumber\\
\tau_{\mu}\dot{\mu}&=& \left(g(x)\right)^{+}_{\mu}
\end{eqnarray}
$\tau_{x}, \tau_{\lambda}\deff\text{diag}\{\tau_{\lambda_{i}},\ldots,\tau_{\lambda_{m}}\}, \tau_{\mu}\deff\text{diag}\{\tau_{\mu_{i}},\ldots,\tau_{\mu_{p}}\}$ are positive definite matrices and $\left(g(x)\right)^{+}_{\mu_i}$ is given by
\beq\label{IED1}
\left(g(x)\right)^{+}_{\mu}=\begin{cases}
	g_{i}(x) \;\;\;\;\;\;\;\;\;\;\;\;\;\;\text{if}\; \mu_{i}>0 \;\; \forall i \in \{1,\hdots, p\}, \label{PriD}\\
	\text{max}(0,g_{i}(x))\;\; \text{if} \;\mu_{i}=0.
\end{cases}
\eeq
\begin{remark}The discrete time primal-dual gradient descent equations of convex optimization problem \eqref{standard_SOP} are 
	\begin{eqnarray*}
		x(t_{k+1})&=&x(t_k)-\eta_{x}\nabla_x\mathcal{L}(x,\lambda,\mu)\\
		\lambda(t_{k+1})&=&\lambda(t_k)+\eta_{\lambda}\nabla_{\lambda}\mathcal{L}(x,\lambda,\mu),\;\;\; k\in \mathbb{Z}^+\\
		\mu(t_{k+1})&=&\mu(t_k)+\eta_{\mu}\left(\nabla_{\lambda}\mathcal{L}(x,\lambda,\mu)\right)^{+}_{\mu},\;\;\; k\in \mathbb{Z}^+.
	\end{eqnarray*}
	where $\eta_x>0$, $\eta_{\lambda}>0$ and $\eta_{\mu}>0$ represents the step size. Further these are equivalent to the continuous time equations \eqref{primal-dual-dyn}, if the step sizes are chosen as $\eta_x=\Delta T\tau_x^{-1}$, $\eta_{\lambda}=\Delta T\tau_{\lambda}^{-1}$ and $\eta_{\mu}=\Delta T\tau_{\mu}^{-1}$, where $\Delta T=t_{k+1}-t_k$.
\end{remark} 

\noindent {\em Equality constrained optimization problem:}
 Consider the following dynamics
\begin{equation}\label{maindyn}
\begin{split}
-\tau_{x}\dot{x}&=\nabla_{x} f(x)+\sum_{i=1}^{m}\lambda_{i}\nabla_{x} h_{i}(x)+u\\
\vspace{-100mm}
\tau_{\lambda_{i}}\dot{\lambda}_{i}&= h_{i}(x),\;\;~\hspace{2cm}
y =-x,
\end{split}
\end{equation}
where $u, y \in \mathbb{R}^n$. Note that, the unforced system of equations, obtained by setting $u=0$ in \eqref{maindyn}, represent primal-dual dynamics corresponding to convex optimization problem \eqref{standard_SOP} with only equality constraints.
\begin{proposition}\label{prop::eq_const}
	Let $\bar{z}=(\bar{x}, \bar{\lambda})$ represent the unforced equilibrium of \eqref{maindyn}. Assume $h(x)$ is  convex and  $f(x)$ strictly convex. Then the system of equations \eqref{maindyn} are passive with port variables $(\dot{u},\dot{y})$ \cite{CSS_letters,ICCvdotidot}.  Further every solution of the unforced version ($u=0$) of  \eqref{maindyn} asymptotically converges to $\bar{z}$.
\end{proposition}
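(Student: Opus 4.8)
The plan is to certify passivity with respect to the derivative port pair $(\dot u,\dot y)$ by exhibiting a storage function built from the state velocities, and then to read off unforced convergence from LaSalle's invariance principle. First I would differentiate \eqref{maindyn} in time. Since the equality constraints in the standard form \eqref{standard_SOP} are affine, the gradients $\nabla_x h_i$ are constant vectors; collecting them as the rows of a matrix $A$, differentiation yields $-\tau_x\ddot x=\nabla^2 f(x)\dot x+A^\top\dot\lambda+\dot u$ and $\tau_\lambda\ddot\lambda=A\dot x$, together with $\dot y=-\dot x$. Taking $S=\tfrac12\dot x^\top\tau_x\dot x+\tfrac12\dot\lambda^\top\tau_\lambda\dot\lambda$ as storage function -- a legitimate function of the state $(x,\lambda)$ once $\dot x,\dot\lambda$ are substituted from \eqref{maindyn}, nonnegative and vanishing exactly at the equilibria -- and differentiating along trajectories, the indefinite cross terms $-\dot x^\top A^\top\dot\lambda$ and $\dot\lambda^\top A\dot x$ cancel by skew symmetry (this is precisely the interconnection structure of Figure~\ref{fig:my_labela}), leaving $\dot S=-\dot x^\top\nabla^2 f(x)\dot x+\dot u^\top\dot y$. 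Convexity of $f$ makes the quadratic term nonpositive, so $\dot S\le\dot u^\top\dot y$, which is the asserted passivity (this half follows the computation of \cite{CSS_letters,ICCvdotidot}).

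For the unforced system I would set $u\equiv 0$, so that $\dot S=-\dot x^\top\nabla^2 f(x)\dot x\le 0$, and strict convexity of $f$ makes $\dot S=0$ equivalent to $\dot x=0$. By LaSalle's invariance principle every bounded solution approaches the largest invariant set contained in $\{\dot x=0\}$; on that set $x$ is constant, hence $\ddot x=0$, so the differentiated first equation collapses to $A^\top\dot\lambda=0$, and since $\tau_\lambda\dot\lambda=h(x)$ is then constant one argues -- using the existence of the equilibrium $\bar z$ -- that $h(x)=0$ and $\dot\lambda=0$. Thus the invariant set consists of equilibria; strict convexity of $f$ identifies the primal part as $\bar x$, and with independent equality constraints the dual part is $\bar\lambda$, so the whole trajectory converges to $\bar z$.

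The routine passivity computation is not the hard part; the delicate step is the convergence argument. Since $S$ controls only the velocities, it does not by itself confine $(x,\lambda)$ to a compact set, so applying LaSalle requires a separate precompactness argument -- here the equilibrium-centered function $V=\tfrac12\norm{x-\bar x}_{\tau_x}^2+\tfrac12\norm{\lambda-\bar\lambda}_{\tau_\lambda}^2$, with $\dot V=-(x-\bar x)^\top(\nabla f(x)-\nabla f(\bar x))\le 0$, is the natural companion certificate -- and one must still rule out $\lambda$ drifting along $\ker A^\top$ inside the invariant set. I also expect the affine character of the equality constraints to be essential: for merely convex $h$ an extra indefinite term $\dot x^\top\big(\sum_i\lambda_i\nabla^2 h_i(x)\big)\dot x$ would survive in $\dot S$ with no definite sign, and the clean dissipation inequality -- hence both the passivity estimate and the LaSalle argument -- would be lost.
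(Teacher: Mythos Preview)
The paper does not actually prove this proposition: it is stated with citations to \cite{CSS_letters,ICCvdotidot} and no argument is given. So there is no ``paper's own proof'' to compare against. That said, your approach is exactly the one the cited references use and is fully consistent with the methodology the paper displays elsewhere: the storage functions \eqref{storage_fun_ineq_const} for the inequality block are of the same velocity-squared type $\tfrac12\sum\tau_{\mu_i}\dot\mu_i^2$, and the citation key \texttt{ICCvdotidot} itself signals that the passivity is with respect to $(\dot u,\dot y)$ via a kinetic-energy storage. Your computation $\dot S=-\dot x^\top\nabla^2 f(x)\dot x+\dot u^\top\dot y$ is correct, and your LaSalle argument with the auxiliary function $V=\tfrac12\norm{x-\bar x}_{\tau_x}^2+\tfrac12\norm{\lambda-\bar\lambda}_{\tau_\lambda}^2$ to secure precompactness is the standard companion step.

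Two caveats you already flagged are worth keeping explicit. First, the proposition as stated assumes only that $h$ is convex, but your derivation (and the underlying theory) needs $h$ affine so that $\nabla_x h_i$ is constant and no term $\sum_i\lambda_i\dot x^\top\nabla^2 h_i(x)\dot x$ appears; this is harmless here because the equality constraints in \eqref{standard_SOP} are affine by definition, and the proposition's wording is simply loose. Second, convergence of $\lambda$ to a unique $\bar\lambda$ needs the rows of $A$ independent; otherwise the equilibrium set is an affine subspace in $\lambda$ and one only gets convergence to that set, which is all the optimization problem actually determines.
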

\noindent{\em Inequality constraint: }
We now define the inequality constraint $g_i(\tilde{u})\leq 0$  as the following hybrid dynamics
\begin{equation}\label{IED}
\tau_{\mu_i}\dot \mu_i=(g_i(\tilde u))^+_{\mu_i}
\end{equation}
where  $\tilde{u}\in \mathbb{R}^n$ and $i\in \{1\cdots p\}$. 
This is introduced in \cite{kose1956solutions}, where the authors construct a dynamical system which converges to the stationary solution of saddle value problems. These equations are proposed in such a way that, if the initial condition of $\mu(t)$ is non-negative, then the trajectories $\mu(t)$ always stay inside positive orthant $\mathbb{R}^+$. 
Note that the discontinuity in the above equations \eqref{IED1} occurs when $g_i(\tilde{u})<0$ and $\mu_i=0$, the value of $(g_i(\tilde{u}))^+_{\mu_i}$ switches from $g_i(\tilde{u})$ to $0$. This ensures that the $\mu_i$'s does not go below zero. To make this more visible, we redefine these equations equivalently as follows; 
Let $\mathcal{P}$ represent the power set of $\{1\cdots p\}$, then we define the function $\sigma:[0,\;\infty)\rightarrow \mathcal{P}$ as follows
\beq\label{sigma_map} \sigma(t)=\{i \mid \; \mu_i(t)=0\;\;\text{and }\;g_i(\tilde{u})\leq0\, \forall i \in \{1, ..., p\}\}.\eeq
 With $\sigma(t)$ representing the switching signal, equation \eqref{IED} now takes the form of a switched system 
\beq \label{active_const_def}
\tau_{\mu_i}\dot \mu_i=g_i(\tilde u,\sigma)&=&\left\{\begin{matrix}
	g_i(\tilde u) ;&\; i\notin \sigma(t)\\0 ;&\;i\in \sigma(t)
\end{matrix} \right. 
\eeq
The overall dynamics of the $p$ inequality constraints $g_i(\tilde{u})\leq 0$ $\forall i\in \{1\cdots p\}$ can be written in a compact form as:
\begin{equation}\label{IEdyn}
\tau_{\mu}\dot{\mu}=g(\tilde u,\sigma)
\end{equation}
where $\mu_i$ and $g_i(\tilde{u},\sigma)$ are $i^{th}$ components of  $\mu$ and $g(\tilde{u},\sigma)$ respectively.
%
%
%
Consider the following storage function(s) 
\beq\label{storage_fun_ineq_const}
S_{\sigma_q}(\mu)&=&\dfrac{1}{2}\sum_{i\notin\sigma_q}^{}\dot \mu_i^2\tau_{\mu_i}\;\;\; \forall \sigma_q \in \mathcal{P}
\eeq

\begin{proposition}\cite{CSS_letters,HybridPassive} \label{prop:ineq_passivity}
	The switched system \eqref{IEdyn} is  passive with switched storage functions $S_{\sigma_q}$ (defined one for each switching state $\sigma_q\in\mathcal{P}$ ), input port $u_s=\dot{\tilde{u}}$ and  output port $y_s=\dot{\tilde{y}}$ where $\tilde{y}=\sum_{\forall i} \mu_i\nabla_{\tilde u}g_i(\tilde{u})$. That is, for each $\sigma_p\in \mathcal{P}$ with the property that for every pair of switching times $(t_i,t_j)$, $i<j$ such that $\sigma(t_i)=\sigma(t_j)=\sigma_p\in \mathcal{P}$ and $\sigma(t_k)\neq \sigma_p$ for $t_i<t_k<t_j$, we have
	\begin{eqnarray}\label{Hybrid_passivity}
	S_{\sigma_p}(\mu(t_j))-S_{\sigma_p}(\mu(t_i))\leq \int_{t_i}^{t_j}u_s^\top y_sdt.
	\end{eqnarray}
\end{proposition}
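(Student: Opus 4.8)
The plan is to obtain the switched dissipation inequality \eqref{Hybrid_passivity} from three ingredients: an infinitesimal dissipation estimate valid on every interval on which the switching signal $\sigma$ is constant, a comparison of the storage functions $S_{\sigma_q}$ across a switching instant, and a telescoping argument over $[t_i,t_j]$.

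First I would fix an interval on which $\sigma(t)\equiv\sigma_q$ is constant. There, \eqref{active_const_def} gives $\tau_{\mu_i}\dot\mu_i=g_i(\tilde u)$ for $i\notin\sigma_q$, hence $\tau_{\mu_i}\ddot\mu_i=\tfrac{d}{dt}g_i(\tilde u)=\nabla_{\tilde u}g_i(\tilde u)^\top\dot{\tilde u}$, while $\dot\mu_i\equiv0$ for $i\in\sigma_q$. Differentiating the storage function \eqref{storage_fun_ineq_const} along the flow then gives
\begin{equation*}
\dot S_{\sigma_q}=\sum_{i\notin\sigma_q}\tau_{\mu_i}\dot\mu_i\ddot\mu_i=\sum_{i\notin\sigma_q}\dot\mu_i\,\nabla_{\tilde u}g_i(\tilde u)^\top\dot{\tilde u}=\sum_{i=1}^{p}\dot\mu_i\,\nabla_{\tilde u}g_i(\tilde u)^\top\dot{\tilde u},
\end{equation*}
the last step restoring the indices $i\in\sigma_q$ at no cost because $\dot\mu_i\equiv0$ there. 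On the other hand, differentiating $\tilde y=\sum_{i}\mu_i\nabla_{\tilde u}g_i(\tilde u)$ and using $u_s=\dot{\tilde u}$, $y_s=\dot{\tilde y}$,
\begin{equation*}
u_s^\top y_s=\dot{\tilde u}^\top\dot{\tilde y}=\sum_{i=1}^{p}\dot\mu_i\,\nabla_{\tilde u}g_i(\tilde u)^\top\dot{\tilde u}+\sum_{i=1}^{p}\mu_i\,\dot{\tilde u}^\top\tfrac{d}{dt}\big(\nabla_{\tilde u}g_i(\tilde u)\big).
\end{equation*}
Convexity of $g_i$ makes $\nabla_{\tilde u}g_i$ a monotone map, so $\dot{\tilde u}^\top\tfrac{d}{dt}(\nabla_{\tilde u}g_i(\tilde u))=\dot{\tilde u}^\top\nabla_{\tilde u}^2 g_i(\tilde u)\dot{\tilde u}\ge0$, and since $\mu_i\ge0$ (the positive orthant being invariant for \eqref{IEdyn}) the second sum is nonnegative. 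Hence $\dot S_{\sigma_q}\le u_s^\top y_s$, and integrating over any constant-mode interval yields the dissipation inequality there.

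Next I would look at a switching time $t^\ast$, where $\mu$ and the exogenous signal $\tilde u$ are continuous and only the index set $\sigma$ jumps; write $\sigma^-,\sigma^+$ for the modes immediately before and after $t^\ast$. If an index $j$ enters $\sigma$ at $t^\ast$ (i.e.\ $\mu_j$ reaches $0$ with $g_j(\tilde u)\le0$), the nonnegative term $\tfrac12\tau_{\mu_j}\dot\mu_j^2=\tfrac12\tau_{\mu_j}^{-1}g_j(\tilde u)^2$ is removed from the sum in \eqref{storage_fun_ineq_const}, so $S$ cannot increase. If an index $j$ leaves $\sigma$, it can only do so at a zero crossing of $g_j(\tilde u)$ --- since $\dot\mu_j\equiv0$ has kept $\mu_j$ at $0$, the only exit condition is $g_j(\tilde u)$ becoming positive --- so the term $\tfrac12\tau_{\mu_j}^{-1}g_j(\tilde u(t^\ast))^2$ that re-enters the sum equals $0$. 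Either way $S_{\sigma^+}(\mu(t^\ast))\le S_{\sigma^-}(\mu(t^\ast))$. Now let $t_i=s_0<s_1<\dots<s_N=t_j$ list the switching times in $[t_i,t_j]$, with mode $\sigma^{(k)}$ on $(s_{k-1},s_k)$, so that the switch at $s_0$ leaves $\sigma_p$ and the switch at $s_N$ returns to $\sigma_p$; iterating the per-interval dissipation estimate and the switch inequality above telescopes to
\begin{equation*}
S_{\sigma_p}(\mu(t_j))\le S_{\sigma^{(N)}}(\mu(s_N))\le S_{\sigma^{(1)}}(\mu(s_0))+\sum_{k=1}^{N}\int_{s_{k-1}}^{s_k}u_s^\top y_s\,dt\le S_{\sigma_p}(\mu(t_i))+\int_{t_i}^{t_j}u_s^\top y_s\,dt,
\end{equation*}
which is exactly \eqref{Hybrid_passivity}.

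The hard part will be the switching bookkeeping: one must observe that a ``leave-$\sigma$'' transition is forced to occur precisely where $g_j(\tilde u)=0$, so that the re-admitted storage term vanishes and $S$ does not jump upward there, while an ``enter-$\sigma$'' transition can only shed nonnegative energy --- this is what makes the telescoped chain run in the correct direction. The remaining points are comparatively routine: continuity of $\mu$ and of $g_i(\tilde u)$ across switches, exclusion of Zeno behaviour so that only finitely many switches occur on $[t_i,t_j]$ (making the sum over $k$ legitimate), and the monotonicity/Hessian estimate that lower-bounds $u_s^\top y_s$ by $\dot S_{\sigma_q}$; in particular for the SVM, where the $g_i$ are affine, this last term is identically zero and the per-interval inequality holds with equality.
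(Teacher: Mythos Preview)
The paper does not actually prove this proposition: it is stated with citations to \cite{CSS_letters,HybridPassive} and no argument is given in the text, so there is nothing to compare your attempt against line by line. That said, your proof is correct and is exactly the argument one expects behind the cited result: differentiate $S_{\sigma_q}$ along a constant-mode flow to get $\dot S_{\sigma_q}=\sum_i\dot\mu_i\nabla g_i^\top\dot{\tilde u}$, use convexity of $g_i$ together with $\mu_i\ge0$ to bound this by $u_s^\top y_s$, show that the multiple storage functions can only drop (or stay equal) at switching instants, and telescope.

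Two minor points of bookkeeping are worth tightening. First, $S_{\sigma_q}$ in \eqref{storage_fun_ineq_const} is, despite the notation $S_{\sigma_q}(\mu)$, really a function of $\tilde u$ through $\dot\mu_i=\tau_{\mu_i}^{-1}g_i(\tilde u)$; your jump analysis is therefore comparing $\tfrac12\sum_{i\notin\sigma^\pm}\tau_{\mu_i}^{-1}g_i(\tilde u)^2$ at the same $\tilde u$, which is what makes the ``enter'' and ``leave'' cases transparent --- it would help to say this explicitly. Second, the convention for $\sigma$ at a switching instant (right- versus left-continuous) determines whether the mode on the first subinterval is $\sigma_p$ itself or its successor; your chain of inequalities is correct under either reading, but the labeling ``the switch at $s_0$ leaves $\sigma_p$'' presumes left-continuity, whereas the statement's $\sigma(t_i)=\sigma_p$ suggests right-continuity, in which case $\sigma^{(1)}=\sigma_p$ and the outermost inequality at $t_i$ is an equality. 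These are cosmetic; the substance of your proof is sound, and your closing remark that the Hessian term vanishes for the affine SVM constraints is a nice observation.
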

\begin{proposition}\label{prop::ass_stab_ineq}
	The equilibrium set $\Omega_e$ defined by constant control input $\tilde{u}=\tilde{u}^\ast$ of \eqref{IED}
	\beqn\label{IE_equilibrium}
	\Omega_e=\left\{(\bar{\mu},\tilde{u}^\ast)\left|g_{i}(\tilde{u}^{*})\leq 0,\;\; \bar{\mu}_{i}g_{i}(\tilde{u}^{*})=0 \hspace{0.2cm} \forall i \in \{1,\hdots, p\}\right.\right\}
	\eeqn
	is asymptotically stable.
\end{proposition}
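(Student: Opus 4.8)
Observe first that $\Omega_e$ is nonempty — hence a genuine equilibrium set — only when the frozen input satisfies $g_i(\tilde u^\ast)\le 0$ for every $i$, which we assume throughout. The decisive structural fact is that, with $\tilde u=\tilde u^\ast$ held constant, the $p$ equations in \eqref{IED}--\eqref{IEdyn} \emph{decouple}, so the analysis splits into $p$ scalar hybrid systems $\tau_{\mu_i}\dot\mu_i=(c_i)^{+}_{\mu_i}$ with $c_i\deff g_i(\tilde u^\ast)$ constant. Partition the index set into $I_-=\{i:c_i<0\}$ and $I_0=\{i:c_i=0\}$; then, using $\bar\mu_i g_i(\tilde u^\ast)=0$, the set reduces to the coordinate face $\Omega_e=\{\mu\ge 0:\mu_i=0\ \forall i\in I_-\}$, which is a single point only when $I_0=\emptyset$.

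\emph{Step 1 (coordinate-wise behaviour).} First I would read the scalar dynamics directly off \eqref{IED1}/\eqref{active_const_def}. For $i\in I_0$ the rule gives $\dot\mu_i\equiv 0$ for all $\mu_i\ge 0$, so these coordinates never move and any nonnegative value is admissible. For $i\in I_-$, as long as $\mu_i>0$ we have $\dot\mu_i=c_i/\tau_{\mu_i}<0$, so $\mu_i$ decreases affinely, reaches $0$ at the finite instant $t_i=-\tau_{\mu_i}\mu_i(0)/c_i$, and thereafter the switching rule enforces $\dot\mu_i=\max(0,c_i)=0$, so $\mu_i$ stays at $0$. Consequently each coordinate switches \emph{at most once}, the switching signal $\sigma(\cdot)$ of \eqref{sigma_map} is eventually constant, there is no Zeno behaviour, and every trajectory enters $\Omega_e$ in finite time $\max_{i\in I_-}t_i$ and remains there.

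\emph{Step 2 (set-distance Lyapunov function).} To promote this to stability of the \emph{set} $\Omega_e$ I would use $W(\mu)\deff\tfrac12\sum_{i\in I_-}\tau_{\mu_i}\mu_i^2$. It is continuous, vanishes exactly on $\Omega_e$, and is sandwiched between constant multiples of $\mathrm{dist}(\mu,\Omega_e)^2=\sum_{i\in I_-}\mu_i^2$. Between switches $\dot W=\sum_{i\in I_-}\mu_i c_i\le 0$, with equality iff $\mu_i=0$ for all $i\in I_-$, i.e.\ iff $\mu\in\Omega_e$; and $W$ does not jump at switching instants since $\mu(\cdot)$ is continuous there. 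Hence $W$ is a common weak Lyapunov function for the switched system, giving Lyapunov stability of $\Omega_e$ (choose $\delta$ from the ratio of the extreme eigenvalues of $\tau_\mu$); combined with the finite-time reaching of Step 1 — or, alternatively, with a LaSalle-type invariance argument for switched systems, or with the passivity of Proposition \ref{prop:ineq_passivity} specialised to $u_s=\dot{\tilde u}=0$ — this yields asymptotic stability of $\Omega_e$.

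\emph{Main obstacle.} There are no hard estimates; the only genuine care is hybrid bookkeeping: ensuring the Lyapunov inequality survives the discontinuity of \eqref{active_const_def}, and treating $\Omega_e$ as a face rather than a point when $I_0\neq\emptyset$, which is why a set-distance Lyapunov function is used instead of a pointwise one. Both are defused by the observation in Step 1 that, with the input frozen, each $\mu_i$ switches at most once, so the hybrid trajectory is piecewise an ordinary affine flow and the classical Lyapunov reasoning applies on each piece.
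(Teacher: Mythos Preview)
The paper does not actually supply a proof of this proposition: it is stated bare, with neither a proof environment nor a citation attached, sandwiched between Proposition~\ref{prop:ineq_passivity} and the ``overall optimization problem'' paragraph. So there is no authors' argument to compare against line by line.

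Your argument is correct. With $\tilde u$ frozen the $p$ scalar hybrid systems genuinely decouple, and your case split $I_-$ versus $I_0$ captures the equilibrium set exactly as the face $\{\mu\ge 0:\mu_i=0,\ i\in I_-\}$. The finite-time reaching in Step~1 is right (affine decay, single switch, no Zeno), and your set-distance Lyapunov function $W(\mu)=\tfrac12\sum_{i\in I_-}\tau_{\mu_i}\mu_i^2$ does the job for Lyapunov stability; the derivative computation $\dot W=\sum_{i\in I_-}\mu_i c_i\le 0$ is valid because the coordinates already at zero contribute nothing either way.

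It is worth noting how your choice differs from what the surrounding text sets up. The paper's machinery (Proposition~\ref{prop:ineq_passivity}) is built around the \emph{velocity}-type switched storage functions $S_{\sigma_q}(\mu)=\tfrac12\sum_{i\notin\sigma_q}\tau_{\mu_i}\dot\mu_i^{\,2}$ of \eqref{storage_fun_ineq_const}, together with the hybrid passivity inequality \eqref{Hybrid_passivity} specialised to $u_s=\dot{\tilde u}=0$. With the input frozen those $S_{\sigma_q}$ are piecewise constant and drop only at switching instants, so the route they suggest is a multiple-Lyapunov-function / dwell-time style argument. Your $W$ is instead a \emph{positional} common Lyapunov function that decreases strictly between switches and is continuous across them; this is more elementary, avoids the bookkeeping of multiple storage functions, and delivers finite-time (not merely asymptotic) attractivity as a bonus. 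Either viewpoint is adequate here, but yours is the cleaner one for the constant-input special case.
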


\begin{figure*}
	\centering
	\includegraphics[width=\textwidth]{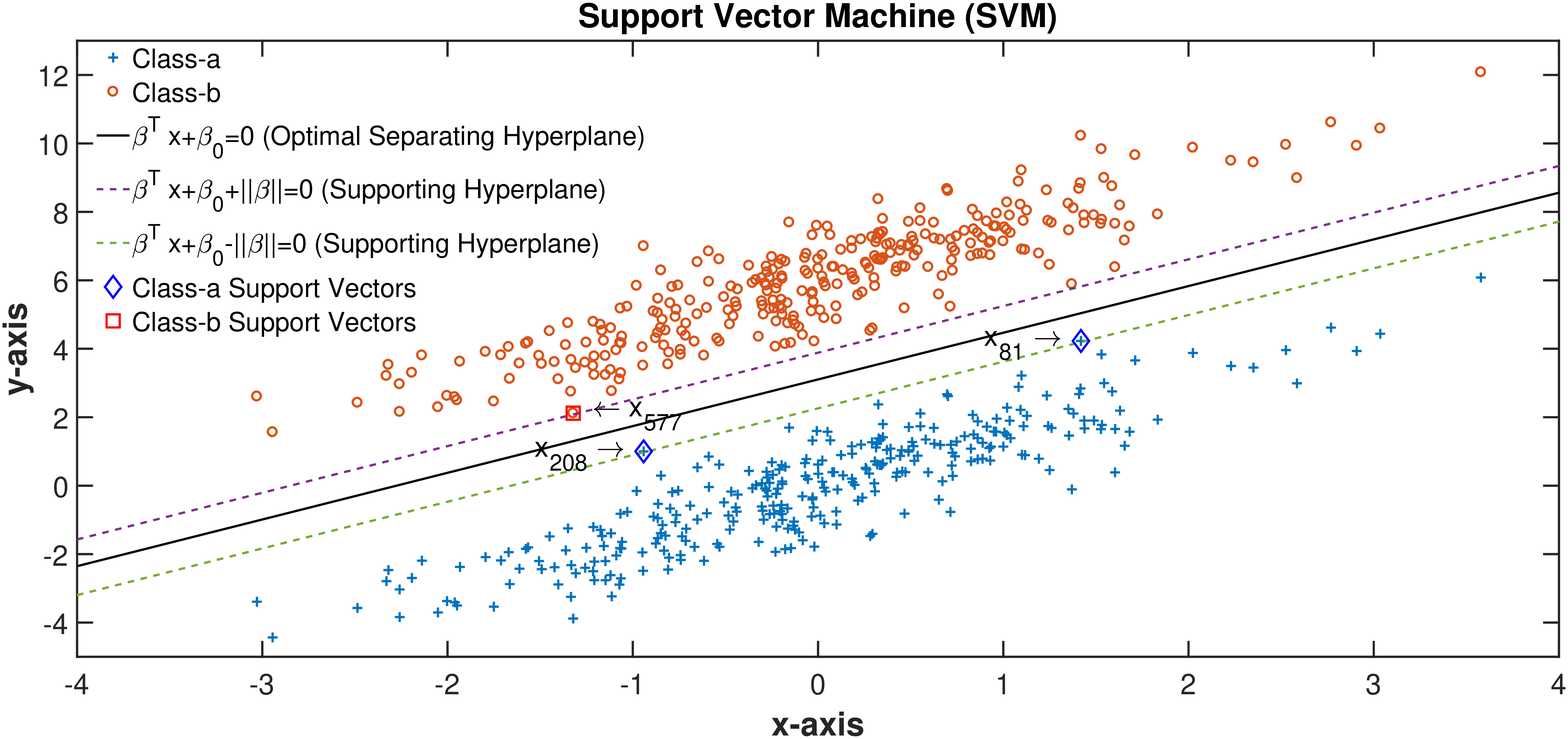}
	\caption{Classification using Support Vector Machine}
	\label{fig::svm3}
\end{figure*}

{\em The overall optimization problem:}
\label{chap::4::sec::3}
We now define a power conserving interconnection between passive systems associated with optimization problem with an equality constraint \eqref{maindyn} and an inequality constraint \eqref{IED} (see Fig. \ref{fig:my_label}).
\begin{proposition} \label{interconnectpassive}\cite{CSS_letters}
	Consider the interconnection of passive systems \eqref{maindyn} and  \eqref{IED}, via the following interconnection constraints $ u=\tilde{y}+v~~ \text{and} ~~\tilde{u}=-y+\tilde{v}, ~ v\in \mathbb{R}^p, ~ \tilde v\in \mathbb{R}^n $. 
	For $\tilde v=0$, the interconnected system is then passive with port variables $\dot{v}$, $-\dot{x}$ (see Fig. \ref{fig:my_labela}). 
	Moreover for $v=0$ and $\tilde v=0$ the interconnected system represents the primal-dual gradient dynamics of the optimization problem \eqref{standard_SOP}
	and the trajectories converge asymptotically to the optimal solution of \eqref{standard_SOP}.
\end{proposition}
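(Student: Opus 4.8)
The plan is to exploit the two passivity results already established together with the standard fact that a power-conserving interconnection of passive systems is passive, and then to extract convergence from the aggregate storage function. Concretely I would proceed in three steps: (1) verify that the interconnection constraints $u=\tilde{y}+v$ and $\tilde{u}=-y+\tilde{v}$ with $\tilde{v}=0$ are power conserving, so that the sum of the two (families of) storage functions certifies passivity of the closed loop with the claimed port pair $(\dot{v},-\dot{x})$; (2) set $v=\tilde{v}=0$ and substitute the interconnection relations back into \eqref{maindyn} and \eqref{IED} to recognise \eqref{primal-dual-dyn}; (3) use the aggregate storage function as a Lyapunov function, together with a switched-system invariance argument and Propositions~\ref{prop::eq_const} and \ref{prop::ass_stab_ineq}, to conclude that the trajectories reach the (unique) KKT point, which by Slater's condition and strong duality is the optimal solution of \eqref{standard_SOP}.

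\noindent\textbf{Step 1 (passivity of the interconnection).} Differentiating the interconnection relations with $\tilde{v}=0$ gives $\dot{u}=\dot{\tilde{y}}+\dot{v}$ and $\dot{\tilde{u}}=-\dot{y}=\dot{x}$ (since $y=-x$). Adding the supply rates $\dot{u}^\top\dot{y}$ and $\dot{\tilde{u}}^\top\dot{\tilde{y}}$ of the two subsystems, the $\dot{\tilde{y}}$-terms cancel and only $\dot{v}^\top\dot{y}=-\dot{v}^\top\dot{x}$ survives. Hence, choosing $W_{\sigma_q}=V+S_{\sigma_q}$ (with $V$ the storage function of \eqref{maindyn} from Proposition~\ref{prop::eq_const} and $S_{\sigma_q}$ the switched storage functions of Proposition~\ref{prop:ineq_passivity}) and combining the two dissipation inequalities on each interval of constant $\sigma$, I obtain an inequality of the form $W_{\sigma_p}(t_j)-W_{\sigma_p}(t_i)\le\int_{t_i}^{t_j}\dot{v}^\top(-\dot{x})\,dt$ in the spirit of \eqref{Hybrid_passivity}. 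It remains to check that $W$ does not jump upward at mode transitions: when an index enters $\sigma$ a nonnegative square term is dropped from $S_{\sigma_q}$, and when an index $i$ leaves $\sigma$ one has $g_i(\tilde{u})=0$ at that instant so the newly added term $\tfrac{1}{2}\tau_{\mu_i}\dot{\mu}_i^2$ vanishes; thus $W$ is nonincreasing across switches and the interconnection is passive with port variables $\dot{v}$ and $-\dot{x}$.

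\noindent\textbf{Step 2 (recovering the primal-dual dynamics).} Setting $v=\tilde{v}=0$, the interconnection relations become $u=\tilde{y}=\sum_{i}\mu_i\nabla_{\tilde{u}}g_i(\tilde{u})$ and $\tilde{u}=-y=x$. Substituting $\tilde{u}=x$ into $\tilde{y}$ and then into \eqref{maindyn} reproduces the $\dot{x}$- and $\dot{\lambda}$-equations of \eqref{primal-dual-dyn}, while \eqref{IED} evaluated at $\tilde{u}=x$ is precisely the $\dot{\mu}$-equation. Hence the closed loop is exactly the primal-dual gradient dynamics of \eqref{standard_SOP}.

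\noindent\textbf{Step 3 (convergence) and the main obstacle.} With $v=0$ the supply rate vanishes, so $W$ is nonincreasing and bounded below. Applying a switched/hybrid invariance principle to $W=\tfrac{1}{2}\dot{x}^\top\tau_{x}\dot{x}+\tfrac{1}{2}\sum_i\tau_{\lambda_i}\dot{\lambda}_i^2+\tfrac{1}{2}\sum_{i\notin\sigma}\tau_{\mu_i}\dot{\mu}_i^2$, every solution approaches the largest invariant set on which $\dot{W}\equiv0$; on this set the dissipation must vanish, which by strict convexity of $f$ forces $\dot{x}\equiv0$ and in turn $\dot{\lambda}\equiv0$ and $\dot{\mu}_i\equiv0$ for every persistently inactive index, so the limit set consists of triples at which the right-hand side of \eqref{primal-dual-dyn} is zero, i.e. that satisfy the KKT system \eqref{standard_KKT}. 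Because Slater's condition holds, any such triple has its $x$-component equal to the (unique, by strict convexity) optimizer of \eqref{standard_SOP}, and combining this with the subsystem convergence statements of Propositions~\ref{prop::eq_const} and \ref{prop::ass_stab_ineq} (with the interconnection signals frozen at their equilibrium values) upgrades convergence to this set into convergence to the optimal point. I expect this last step to be the principal difficulty: since $W$ is built from state velocities and depends on the discrete mode, the classical LaSalle invariance principle does not apply directly, so one must justify a switched-system invariance principle (ruling out Zeno behaviour of $\sigma$ and verifying the absence of upward jumps of $W$), establish boundedness of the trajectories, and then pass carefully from ``all velocities vanish'' to convergence to the single optimal triple.
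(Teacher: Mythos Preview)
The paper does not actually supply a proof of this proposition: it is quoted from \cite{CSS_letters} and stated without argument, with the narrative moving directly to the SVM application. There is therefore no in-paper proof to compare against.

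That said, your three-step sketch is exactly the argument the paper is visibly setting up through Propositions~\ref{prop::eq_const}--\ref{prop::ass_stab_ineq} and the switched storage functions \eqref{storage_fun_ineq_const}, and it matches the approach of the cited source. Your power-conservation computation in Step~1 is correct (the $\dot{\tilde{y}}$-terms cancel, leaving $\dot v^\top(-\dot x)$), and your observation that $W$ cannot jump upward at a mode change---because an index enters $\sigma$ only when the corresponding $g_i$ crosses zero, so the newly activated summand vanishes---is the right mechanism. Step~2 is a straightforward substitution and is fine. In Step~3 you correctly flag the genuine technical content: a velocity-based, mode-dependent $W$ requires a hybrid LaSalle argument rather than the classical one, together with boundedness of trajectories and exclusion of Zeno behaviour of $\sigma$. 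These are precisely the ingredients the cited reference supplies, so your proposal is on target.
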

\noindent         In the next section, we demonstrate the continuous-time primal-dual algorithm, on the convex optimization formulation of Support Vector Machines (SVM) technique \cite{cortes1995support}. 

\section{Linear SVM as an application}

Support Vector Machines \cite{cortes1995support} are a class of supervised machine learning algorithms which are commonly used for data classification. 
In this methodology, each data item is a point in $n$-dimensional space that is mapped to a category (or a class). 
Here the aim is to find an optimal separating hyperplane (OSH) which separates both the classes and maximizes the distance to the closest point from either class (as shown in Figure \ref{fig::svm_ex}). These closest points are usually called support vectors (SV). The lines passing through support vectors and parallel to the optimal separating hyperplane are called supporting hyperplanes (SH).

{\em Problem formulation:} Consider two linearly separable classes, where each class (say class-$a$, class-$b$) contains a set of $N$ unique data points in $\mathbb{R}^2$. Let $X_a$ and $X_b$ denote the set of points in class-$a$ and class-$b$ respectively. In this methodology we find a hyperplane that separates the classes while maximizing the distance to the closest point from either class. Let $L$ be an affine set that characterizes such a hyperplane, defined as follows
\beq
L=\left\{x\in \mathbb{R}^2|x^\top \beta+\beta_0=0 \right\}
\eeq
where $\beta=(\beta_1,\beta_2)\in \mathbb{R}^2$ and $\beta_0\in \mathbb{R}$.  Define the map $l:\mathbb{R}^2\rightarrow \mathbb{R}$ by $l(x)=x^\top \beta +\beta_0$. Note the following, for any $x_0\in L$, $l(x_0)=0$ $\implies$ $x_0^\top \beta =-\beta_0$. This implies $l(x)$ can be rewritten as $l(x)=\beta^\top (x-x_0)$, which further implies the unit vector $\hat{\beta}=\dfrac{\beta}{||\beta||}$ is orthogonal to the line defined by the set $L$, that is, $x^\top\beta+\beta_0=0\iff (x-x_0)^\top \beta=0$.
%

The distance between the point $x_a\in X_a$ and line $L$ is $|AC|=(x_a-x_0)^\top \hat{\beta}$ (see Fig. \ref{fig:svm_derivation}). Similarly the distance between the point $x_b\in X_b$ and line $L$ is $|BC|=(x_b-x_0)^\top( -\hat{\beta})$. We want to find an optimal separating hyperplane that is at least $M$ units away from all the points. This implies
\beq\label{svm_constraints}
\begin{matrix}
	\forall x_a\in X_a, &  	(x_a-x_0)^\top \hat{\beta}     &\geq& M,\\
	\forall x_b\in X_b, & -(x_b-x_0)^\top \hat{\beta}&\geq& M.
\end{matrix}
\eeq
Define $X\deff X_a\cup X_b$, and  $Y\deff Y_a \cup Y_b$ where $Y_a=\underbrace{\{1,\ldots,  1\}}_{\text{n\;\; times}}$ and $Y_b=\underbrace{\{-1,\ldots,  -1\}}_{\text{n\;\; times}}$. 
The inequality constraints \eqref{svm_constraints} can be rewritten as
\beq 
\dfrac{1}{||\beta||}y_i(\beta^\top x_i+\beta_0)\geq M
\eeq
where $y_i=1$ if $x_i\in X_a$ (class-$a$), $y_i=-1$ if $x_i\in X_b$ (class-$b$). Finally, finding the optimal separating hyperplane can be proposed as the following optimization problem,
\begin{equation}\label{convex_form1}
\begin{aligned}
& \underset{\beta, \beta_0}{\text{maximize}}
& & M\\
& \text{subject to}
& & \dfrac{1}{||\beta||}y_i(\beta^\top x_i+\beta_0)\geq M,~\;\; \forall x_i\in X,~ y_i\in Y.
\end{aligned}
\end{equation} 
Since $M$ is arbitrary, choosing $M=\dfrac{2}{||\beta||}$ converts \eqref{convex_form1} into a convex optimization problem
\begin{equation}\label{convex_form2}
\begin{aligned}
& \underset{\beta, \beta_0}{\text{minimize}}
& & \dfrac{1}{2}||\beta||\\
& \text{subject to}
& & y_i(\beta^\top x_i+\beta_0)\geq 1,~\;\; \forall x_i\in X,~ y_i\in Y.
\end{aligned}
\end{equation} 
In order to use the primal-dual gradient method proposed in Section \ref{chap::4::sec::3}, we need the cost function to be twice differentiable. But, the cost function $\frac{1}{2}||\beta||\notin C^2$. 
The optimal solution $(\beta^\ast,\beta_0^\ast)$ of \eqref{convex_form2}, is further equivalent to the optimal solution of
\begin{equation}\label{convex_form3}
\begin{aligned}
& \underset{\beta, \beta_0}{\text{minimize}}
& & \dfrac{1}{2}||\beta||^2\\
& \text{subject to}
& & y_i(\beta^\top x_i+\beta_0)\geq 1,~\;\; \forall x_i\in X,~ y_i\in Y.
\end{aligned}
\end{equation}
We now use this convex optimization formulation for support vector machines, and derive its primal-dual gradient dynamics.\\
{\em Continuous time primal-dual gradient dynamics}: Comparing with the convex optimization formulation given in \eqref{standard_SOP}, the cost function is $f(\beta)=\dfrac{1}{2}||\beta||^2$ and inequality constraints are $g_i(\beta,\beta_0)=1-y_i(\beta^\top x_i+\beta_0)$, $i\in \{1,\cdots, 2N\}$. The Lagrangian can be written as
\beq
L(\beta,\mu)=\dfrac{1}{2}||\beta||^2+\sum_{i=1}^{2N}g_i(\beta,\beta_0)\mu_i
\eeq
where $\mu=(\mu_1,\cdots,\mu_{2N})$ denotes the Lagrange variable corresponding to the inequality constraints $g=(g_1,\cdots, g_{2N})$. The primal dual gradient laws given in \eqref{primal-dual-dyn} for the convex optimization problem \eqref{convex_form3} are
\beqn
\begin{matrix}
	-\tau_{\beta}\dot{\beta}&=&\dfrac{\partial L}{\partial \beta}\\
	-\tau_{\beta_0}\dot{\beta}_0&=&\dfrac{\partial L}{\partial \beta_0}\\
	\tau_{\mu_i}\dot{\mu}_i&=&\left(\dfrac{\partial L}{\partial \mu_i}\right)^+_{\mu_i}\;\; \forall i \in \{1,\hdots, 2N\}
\end{matrix}
\eeqn
equivalently ,
\beq\label{primal-dual_svm}
-\tau_{\beta}\dot{\beta}&=&\beta-\sum_{i=1}^{2N} \mu_iy_ix_i\nonumber\\
-\tau_{\beta_0}\dot{\beta}_0&=&-\sum_{i=1}^{2N} \mu_iy_i\\
\tau_{\mu_i}\dot{\mu}_i&=&(g_i(\beta,\beta_0))^+_{\mu_i}\;\; \forall i \in \{1,\hdots, 2N\}
\nonumber
\eeq 
Note that the equilibrium point of the above dynamical system \eqref{primal-dual_svm} represents the KKT conditions of the optimization problem \eqref{convex_form3}. The first two equations represents the KKT conditions with respect to primal variables and third equation represents the complimentary conditions for the dual variables. This implies, finding the solution of the equilibrium point is equivalent to solving the KKT conditions, which is not a trivial task in many cases. Hence, the equilibria of the above dynamical system is not explicitly known but is implicitly characterized by the optimization problem. Instead of solving for these equilibrium points manually, one can run the dynamical system and use its steady state behavior (points). But to quantify it mathematically, we first have to prove that the dynamical system is globally asymptotically stable at that equilibrium point. To do that we leverage the propositions presented in the previous sections. We now have the following result.
\begin{proposition}
	The primal-dual dynamics \eqref{primal-dual_svm} converges asymptotically to the optimal solution of \eqref{convex_form3}.
\end{proposition}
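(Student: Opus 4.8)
\emph{Proof sketch (plan).} The plan is to recognize \eqref{convex_form3} as a particular instance of the standard convex program \eqref{standard_SOP} and then invoke the interconnection result of Proposition \ref{interconnectpassive}. Concretely, take the primal variable to be $x=(\beta,\beta_0)$, the cost $f(\beta,\beta_0)=\tfrac12\|\beta\|^2$, no equality constraints ($m=0$), and the $2N$ affine inequality constraints $g_i(\beta,\beta_0)=1-y_i(\beta^\top x_i+\beta_0)\le 0$, $i\in\{1,\dots,2N\}$. With this identification the three lines of \eqref{primal-dual_svm} are exactly the primal--dual gradient dynamics \eqref{primal-dual-dyn}: since $\nabla_\beta f=\beta$, $\nabla_\beta g_i=-y_ix_i$, $\nabla_{\beta_0}f=0$, $\nabla_{\beta_0}g_i=-y_i$, the first two lines are $-\tau_x\dot x=\nabla_x f(x)+\sum_i\mu_i\nabla_x g_i(x)$ written componentwise and the third is the hybrid inequality law \eqref{IED}. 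Hence it suffices to check the hypotheses under which Propositions \ref{prop::eq_const}, \ref{prop:ineq_passivity}, \ref{prop::ass_stab_ineq} and \ref{interconnectpassive} apply to this data.

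Most of these are immediate. Each $g_i$ is affine, hence convex; the equality block is vacuous ($m=0$), so the passivity statement of Proposition \ref{prop::eq_const} for \eqref{maindyn} is trivially available, and the hybrid inequality block satisfies Proposition \ref{prop:ineq_passivity}. For Slater's condition I would use the standing assumption that $X_a$ and $X_b$ are linearly separable: there is a hyperplane $(\bar\beta,\bar\beta_0)$ and a $\delta>0$ with $y_i(\bar\beta^\top x_i+\bar\beta_0)\ge\delta$ for all $i$, so rescaling $(\bar\beta,\bar\beta_0)$ by $2/\delta$ yields a point with $g_i<0$ strictly for every $i$. Thus \eqref{convex_form3} satisfies strong duality, the KKT system is necessary and sufficient, an equilibrium of \eqref{primal-dual_svm} exists, and its $(\beta,\beta_0)$-component is an optimal separating hyperplane. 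Power conservation across the interconnection then makes the storage function $V=S_{\mathrm{eq}}+S_{\sigma}$ (with $S_{\mathrm{eq}}=\tfrac12\dot\beta^\top\tau_\beta\dot\beta+\tfrac12\tau_{\beta_0}\dot\beta_0^{2}$ and $S_\sigma$ as in \eqref{storage_fun_ineq_const}) nonincreasing along trajectories, which is the starting point for an invariance argument.

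The one genuine obstacle is that Proposition \ref{prop::eq_const} requires $f$ to be \emph{strictly} convex, whereas $f(\beta,\beta_0)=\tfrac12\|\beta\|^2$ is strictly convex only in $\beta$ and flat in $\beta_0$. Consequently a direct LaSalle/invariance argument on $V$ yields $\dot\beta\equiv0$ on the largest invariant set but not immediately $\dot\beta_0\equiv0$. To close this I would argue on that invariant set: $\dot\beta\equiv0$ together with the first line of \eqref{primal-dual_svm} forces $\beta=\sum_i\mu_iy_ix_i$ to be constant; the inequality-block analysis behind Proposition \ref{prop::ass_stab_ineq}, applied with this frozen input, pins $\mu$ to the equilibrium set $\Omega_e$ on which complementary slackness holds; and since $X_a\cap X_b=\emptyset$, the only configuration consistent with $\beta$ constant and $\mu\in\Omega_e$ has $\sum_i\mu_iy_i=0$, i.e. $\dot\beta_0=\tau_{\beta_0}^{-1}\sum_i\mu_iy_i=0$. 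Hence the whole state is at rest, the invariant set reduces to the KKT point, and asymptotic convergence to the optimum of \eqref{convex_form3} follows. I expect this last step --- showing that the degenerate $\beta_0$-direction also comes to rest --- to be the crux.

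An alternative, cleaner route that avoids the obstacle entirely is to remove the translational redundancy (or add a vanishing regularizer $\tfrac{\varepsilon}{2}\beta_0^{2}$ to the cost) so that the reduced/perturbed cost is strictly convex in all primal variables; then Proposition \ref{interconnectpassive} applies verbatim, and one recovers the claim for \eqref{convex_form3} by a continuity/limiting argument as $\varepsilon\to0$. Either way, the conclusion is that the trajectories of \eqref{primal-dual_svm} converge asymptotically to the unique optimal separating hyperplane of \eqref{convex_form3}.
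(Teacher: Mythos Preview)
Your plan coincides with the paper's: identify \eqref{convex_form3} as an instance of \eqref{standard_SOP} and invoke Propositions \ref{prop::eq_const}--\ref{interconnectpassive}. The paper's entire proof is the one-line claim that the cost is strictly convex and the inequality constraints are convex, so the result follows from those propositions.

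Where you diverge is precisely on the point you flag: the paper asserts that $f(\beta,\beta_0)=\tfrac12\|\beta\|^2$ is strictly convex and applies Proposition~\ref{prop::eq_const} directly, whereas you (correctly) observe that $f$ is strictly convex only in $\beta$ and flat in $\beta_0$, so that the hypothesis of Proposition~\ref{prop::eq_const} is not literally satisfied for the full primal variable $x=(\beta,\beta_0)$. Your additional invariance argument on the $\beta_0$-direction (and the alternative $\varepsilon$-regularization route) are genuine additions that the paper does not supply. In short, your proposal follows the paper's strategy but is more careful about a hypothesis the paper glosses over; the extra work you anticipate is not in the paper, so do not expect to find it there.
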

\begin{proof}
	Since the optimization problem \eqref{convex_form3} has a strictly convex cost function and convex inequality constraints, the result follows from Propositions \ref{prop::eq_const} - \ref{interconnectpassive}.
\end{proof}

\subsection{Simulation Results}
A simulation study is conducted by generating two sets of linearly separable classes having 300 points each, using Normal distribution (see Table \ref{tab:data_gen} for distribution parameters). Figure \ref{fig::svm2a} present the evolution of $\beta,\;\beta_0$. 
\begin{table}[h!]
	\centering
	\caption{Distribution parameters}
	\label{tab:data_gen}
	\begin{tabular}{|l|l|l|l|}
		\hline
		\textbf{}        & \textbf{mean}                     & \textbf{Variance}                          & \textbf{No. of data points} \\ \hline
		\textbf{Class-a} & $\begin{bmatrix}0&0\end{bmatrix}$ & $\begin{bmatrix}1&1.5\\1.5&3\end{bmatrix}$ & 300                       \\ \hline
		\textbf{Class-b} & $\begin{bmatrix}0&6\end{bmatrix}$ & $\begin{bmatrix}1&1.5\\1.5&3\end{bmatrix}$ & 300                         \\ \hline
	\end{tabular}
\end{table}
At equilibrium, the primal-dual dynamics in equation \eqref{primal-dual_svm} results in
$$\beta^{\ast}=\sum_{i=1}^{2N} \mu_i^{\ast}y_ix_i.$$
\begin{remark} The results depicted in Figure \ref{fig::svm2} show that the value the Lagrange variables, except ($ \mu_{81},\; \mu_{208},\;\;\mu_{577}$) are identically equal to zero at equilibrium. 
 Moreover, the data points $x_{81}$, $x_{208}$ and $x_{577}$ corresponding to these non-zero Lagrange variables are called support vectors, can be seen in Fig. \ref{fig::svm3}.  The lines passing through these point and parallel to the separating hyperplane are called supporting hyperplanes. 
\end{remark}
\noindent Hence 
$$\beta^\ast=\mu_{81}^\ast x_{81}+\mu_{208}^\ast x_{208}-\mu_{577}^\ast x_{577}$$
where the data points ($ x_{81},\; x_{208},\;\;x_{577}$) corresponding to these non zero Lagrange variables are support vectors. This implies that the support vectors completely determines the optimal separating hyperplane $\beta^\top x+\beta_0=0$ that separates class-$a$ and class-$b$ (see Fig. \ref{fig::svm1}). However, note that one needs to solve the optimization problem, to find these support vectors.

\begin{remark}
	Figure \ref{fig::svm1} shows that, whenever, an inequality constraint becomes feasible  (i.e. $g_i(\beta,\beta_0)\leq  0$ ) and its corresponding Lagrange variable $\mu_i$ converges to zero, then the closed loop storage function \eqref{storage_fun_ineq_const} switches to a new storage function that is strictly less than the current one, causing a discontinuity. 
	%
	This is coherent with the Proposition \ref{prop:ineq_passivity}, where passivity property is defined with `multiple storage functions'.
\end{remark}
\begin{figure}[t]
	\includegraphics[trim={0cm 0cm 0 0cm},width=9cm,height=6cm]{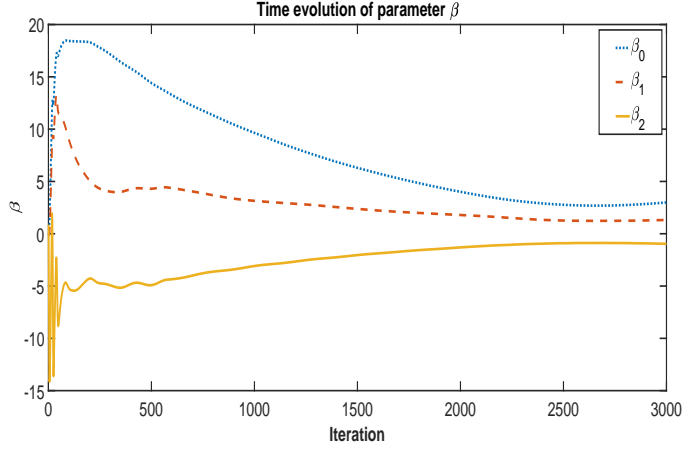}
	\caption{Time evolution of $\beta$ and $\beta_0$ }
	\label{fig::svm2a}
\end{figure}
\begin{figure}[t]
	\includegraphics[trim={7cm 0cm 0cm 0cm}, clip, scale=0.25]{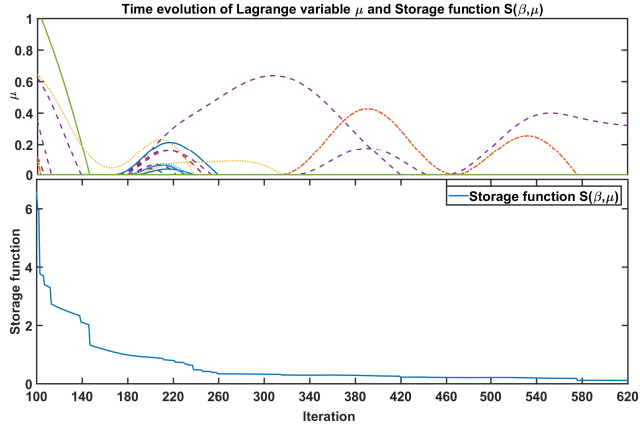}
	\caption{Time evolution of closed-loop storage function. }
	\label{fig::svm1}
\end{figure}
\begin{figure}[t]
	\includegraphics[trim={5cm 0cm 0 0cm},clip, width=9.5cm,height=6cm]{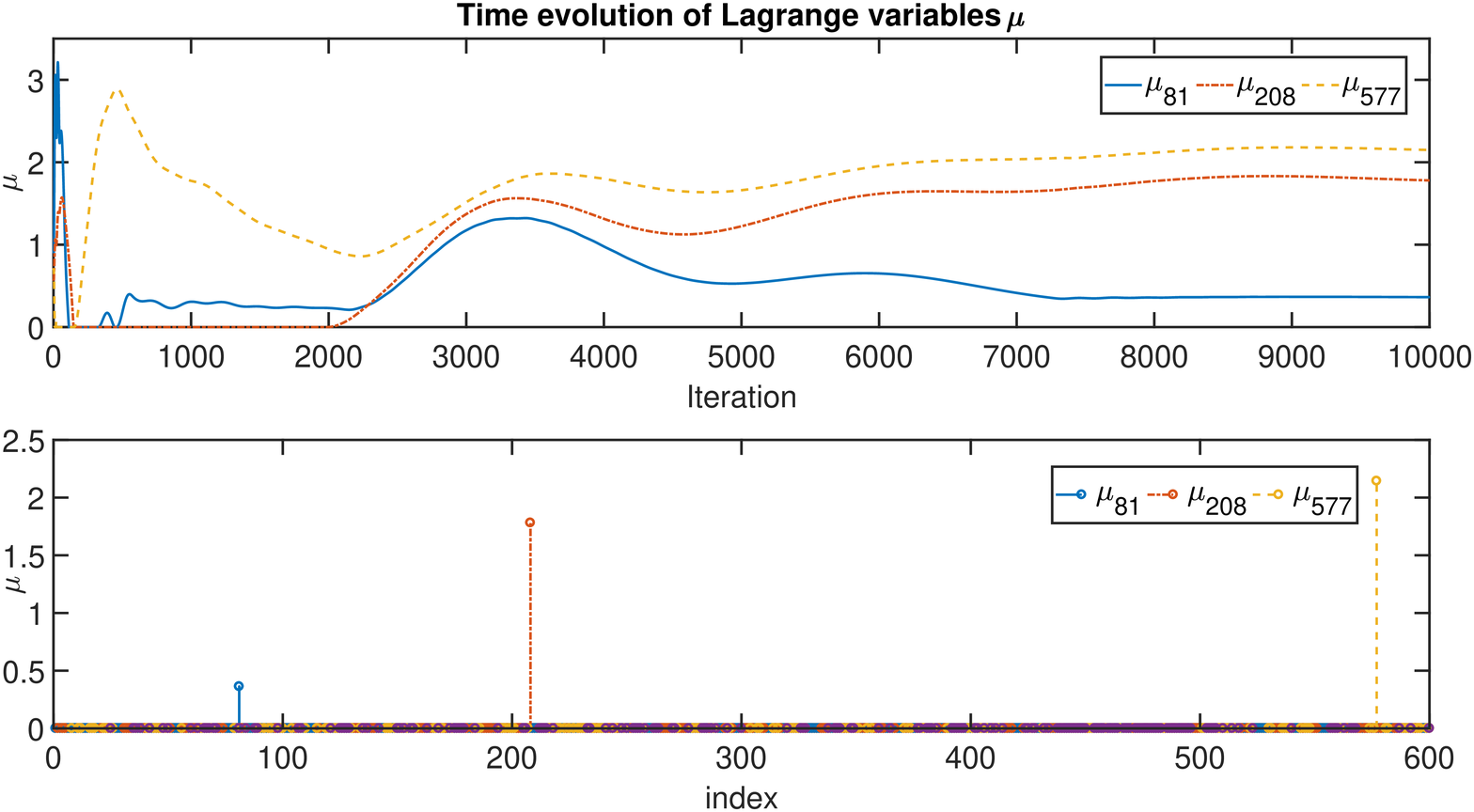}
	\caption{Time evolution of Lagrange variables $\mu_i$, $i\in \{1\cdots 600\}$. }
	\label{fig::svm2}
\end{figure}


\section{Future work}
In Section III, the primal-dual algorithm is treated as interconnected passive systems, (i) convex optimization problem with only equality constraint, (ii) a state dependent switching system for inequality constraint. Recall that in Proposition 4, we interconnected these systems using
\beq\label{interconnection_ch6}
\begin{bmatrix}
	u\\\tilde{u}
\end{bmatrix}&=& \begin{bmatrix}
0 & 1\\-1 &0
\end{bmatrix}\begin{bmatrix}
y\\ \tilde{y}
\end{bmatrix}+\begin{bmatrix}
v\\\tilde{v}
\end{bmatrix}
\eeq
where $v$ and $\tilde{v}$ are considered as new input port-variables of the interconnected system. We can use these new port-variables to analyze and improve the primal-dual gradient laws. The following are some of the important ideas that can be  leveraged for future work.

\noindent {\em Robustness}: To analyze uncertainties in parameters or disturbances such as the numerical error accumulated in the primal and dual variables, one can rewrite interconnection as
	   \begin{equation}
	   u=\tilde{y}+\Delta \tilde{y}~~ \text{and} ~~\tilde{u}=x+\Delta x
	  \end{equation}
	   where $\Delta x$ and $\Delta \tilde{y}$ denotes the numerical error in $x$ (primal variable) and $\tilde{y}$ (a function of dual variable) respectively. These can be treated as external disturbances  creeping in through the interconnected port variables. 
We can provide robustness analysis quantitatively (on sensitivity of the algorithm due to numerical errors), using input/output dissipative properties \cite{l2gain} of these systems.

\noindent {\em Stochastic gradient descent:} 
In SVM simulation we have seen that there are 600 inequality constraints (each corresponds to a data-point). Usually, real world examples may contain many more data-points. Each data-points gives rise to an inequality constraint, and further leads to a gradient-law. In situations involving large data, it is computationally ineffective to run gradient-descent algorithm using all the data-points. In general this obstacle is circumvented using a variation in gradient descent method called {\em stochastic gradient descent}. Can we propose a passivity based convergence analysis for stochastic gradient descent?

\noindent {\em Control synthesis}: Using these new port variables one can interconnect the primal-dual dynamics to a plant, such that the closed-loop system is again a passive dynamical system \cite{stegink2017unifying}. One can also explore the idea of Barrier functions \cite{boyd2004convex} to derive a bounded controller. Gradient methods are inherently distributed computing methods. Hence the controllers derived from these may inherit this property. Moreover, this framework enables us to solve control problems that whose operating points are characterized by an optimization problems.
\bibliographystyle{IEEEtran}
\bibliography{refs}

\end{document}